\documentclass[11pt]{article}
\usepackage[toc,page]{appendix}
\usepackage{authblk}

\usepackage{graphicx} 
\graphicspath{ {images/} }

\usepackage[scr=rsfs]{mathalpha}

\usepackage{subcaption}
\usepackage{booktabs}
\usepackage[ruled,vlined]{algorithm2e}
\usepackage{hyperref}
\usepackage{subfiles}
\usepackage{array}

\usepackage{amsfonts}
\usepackage{amsmath}
\usepackage{amsthm}
\usepackage{physics}

\newtheorem{theorem}{Theorem}[section]
\newtheorem{lemma}{Lemma}[section]

\usepackage{thm-restate}


\topmargin -.5in
\textheight 9in
\oddsidemargin -.25in
\evensidemargin -.25in
\textwidth 7in

\newcommand{\sS}{\mathcal{S}} 
\renewcommand{\H}{\mathcal{H}} 


\newcommand{\R}{\mathbb{R}} 


\newcommand{\ignore}[1]{}

\begin{document}

\author{Peyman Afshani}
\author{Pingan Cheng}
\affil{Department of Computer Science, Aarhus University, Denmark}
\affil{\{peyman, pingancheng\}@cs.au.dk}
\title{An Optimal Lower Bound for Simplex Range Reporting}
\maketitle

\medskip

\begin{abstract}

We give a simplified and improved lower bound for the simplex range reporting problem.
We show that given a set $P$ of $n$ points in $\R^d$, any data structure that uses
$S(n)$ space to answer such queries must have 
$Q(n)=\Omega((n^2/S(n))^{(d-1)/d}+k)$ query time, where $k$ is the output size.
For near-linear space data structures, i.e., $S(n)=O(n\log^{O(1)}n)$,
this improves the previous lower bounds by Chazelle and Rosenberg~\cite{cr1996s}
and Afshani~\cite{a2012i} but perhaps more importantly, it is the first
ever tight lower bound for any variant of simplex range searching for
$d\ge 3$ dimensions. 

We obtain our lower bound by making a simple connection to well-studied
problems in incident geometry which
allows us to use known constructions in the area. 
We observe that a small modification of a simple already existing construction can lead to our
lower bound. 
We believe that our proof is accessible to a much wider audience, 
at least compared to the previous intricate probabilistic proofs based on measure arguments by Chazelle and Rosenberg~\cite{cr1996s} and Afshani~\cite{a2012i}.

The lack of tight or almost-tight (up to polylogarithmic factor) lower bounds for near-linear space
data structures is a major bottleneck in making progress on problems such as proving
lower bounds for multilevel data structures.
It is our hope that this new line of attack based on incidence geometry can lead to further
progress in this area. 

\end{abstract}

\section{Introduction}

In the problem of simplex range reporting,
we are given a set $P$ of $n$ points in $\R^d$ as input
and we want to preprocess $P$ into a structure such that
given any query simplex $\gamma$,
we can report $P\cap\gamma$ efficiently. 
It is known that given $O(n)$ space,
the problem can be solved using $Q(n) = O(n^{1-1/d} + k)$ query time
where $k$ is the output size, i.e., $|P\cap\gamma|$~\cite{c2012o}.
However, current best lower bounds only match this upper bound
in the plane~\cite{cr1996s,a2012i} and the best known lower bound 
is off by a factor of $2^{O(\sqrt{\log n})}$ in higher dimensions~\cite{a2012i}.
Closing this gap has been a long-standing open problem
for this fundamental problem in computational geometry.

In this paper, we prove a tight query time lower bound for simplex range reporting 
in the pointer machine model in the case when the space usage is linear.
Our proof dramatically simplifies the previously known (suboptimal) 
proofs in~\cite{cr1996s} and~\cite{a2012i}.
We obtain the result by observing a connection to incidence geometry which 
allows us to use simple deterministic ``grid-based'' constructions and avoid
the intricate probabilistic construction and measure analysis used in the
previous proofs~\cite{a2012i,cr1996s}.

\subsection{Related Work.} 

Simplex range reporting is a classical and fundamental problem in computational geometry
and can be viewed as the most general case of range searching as far as linear constraints are concerned.
Indeed, by using multilevel data structures~\cite{agarwal2017simplex} and polyhedron triangulation,
any range intersection reporting problem with constant complexity linear inputs and queries reduces to simplex range reporting.
In general, there are many flavors of the problem.
Here, we focus on the reporting variant where given a query simplex, the goal is to output
the list of points inside the query, a.k.a. ``simplex range reporting''.
However, counting variants are also well-studied where the points have weights from a semi-group
and given the query, the goal is to output the sum of the weights of the points inside
the query, a.k.a. ``simplex range searching''. 

We now quickly review the history of the problem.
All the upcoming results apply to both variants of the problem. 
When discussing a  data structure, we use $S(n)$ to refer to the space complexity
and $Q(n)$ to refer to the query time (ignoring the time required to produce the output). 
Thus, with our notation, a data structure for simplex range reporting uses $S(n)$ space and 
it has the query time of $Q(n) + O(k)$. 

The first nontrivial result for the problem dates back to the early 1980s~\cite{willard1982polygon}.
After many early attempts~\cite{ew1986h,y1983a,yde1989p,a1984n,c1985p,yy1985a,hw1987e,w1988p,cw1989q},
significant progress was made after the discovery of 
fundamental tools such as the partition theorem~\cite{matousek1991efficient,m1993r,c2012o}
and cutting lemma~\cite{matousek1991cutting,chazelle1993cutting,deberg1995cuttings}. 
The first near-optimal solution of $O(n^{1+\varepsilon})$ space and $O(n^{1-1/d+\varepsilon}+k)$ query time\footnote{In this paper, 
$\varepsilon$ is an arbitrarily small positive constant.}
was found by Chazelle, Sharir, and Welzl~\cite{csw1992q} and it was simplified and slightly improved by Matou\v sek~\cite{m1993r}.
Finally, in 2012, Chan~\cite{c2012o} removed the $\varepsilon$ factors in the space and query time~\cite{csw1992q}.

It is clear from the above bounds that simplex range searching is a difficult problem since
using linear space, we can only improve the trivial query bound by an 
$n^{1/d}$ factor.
In 1989, Chazelle formally proved the difficulty of the problem by
showing a query time lower bound of $Q(n) = \Omega(n^{1-1/d}/\log n)$ 
for the general simplex range searching problem given linear space 
in the semigroup arithmetic model~\cite{c1989l}.
Unlike the upper bounds, this lower bound does not apply to the simplex range reporting problem.
Seven years later, Chazelle and Rosenberg~\cite{cr1996s} overcame this issue, and they 
showed that if the query time is $O(n^\delta + k)$, then the data structure must use
$\Omega(n^{d-d\delta-\varepsilon})$ space, where $k$ is the output size.
Note that the conjectured space-time trade-off for this problem is
$S(n) = O((n/Q(n))^d)$ and thus this lower bound is a factor
$n^{\varepsilon}$ factor away from this bound. 
It was observed by Afshani~\cite{a2012i} that another lower bound of
Chazelle and Liu~\cite{cl2001l} for the two-dimensional fractional cascading problem in fact
achieves the aforementioned conjecture space-time trade-off for simplex range reporting in the
plane ($d=2$). 
However, for $d\ge 3$, the only improvement is a lower bound by 
Afshani~\cite{a2012i} 
who showed a tighter query time lower bound of $\Omega(n^{1-1/d}/2^{O(\sqrt{\log n})})$~\cite{a2012i}
which narrows the gap from a polynomial ($n^{\varepsilon}$) factor to 
a sub-polynomial ($2^{O(\sqrt{\log n})}$) one. 
Completely eliminating this gap seems like a challenging problem since the techniques used
by the previous lower bounds inherently tie to a long-standing open problem known as the
Heilbronn's triangle problem~\cite{roth}.

The lack of tight lower bounds for the simplex range reporting problem is also a bottleneck
in trying to obtain lower bounds for some more complicated problems,  
for instance, for multilevel data structures (i.e., data structures that involve multiple levels
of simplex range searching data structures).

\subsection{Our Contribution.}
We simplify and improve the lower bound for simplex range reporting by Chazelle and Rosenberg~\cite{cr1996s} and Afshani~\cite{a2012i}.
Specifically, we show a lower bound of $Q(n)=\Omega((n^2/S(n))^{(d-1)/d}+k)$ for the problem.
When $S(n)=O(n)$, we get a clean lower bound of $Q(n)=\Omega(n^{(d-1)/d}+k)$,
which is the \textbf{first tight lower bound} for simplex range reporting for $d\ge 3$. 
By a known technique~\cite{a2012i},
our result also improves the lower bound for halfspace range reporting in $9$ and higher dimensions.
Along the way, we made the observation that the point-hyerplane incidence problem
is highly related to proving lower bounds for simplex range reporting.

\section{Preliminaries of the Pointer Machine Lower Bound Framework}

We will prove the lower bound for simplex range reporting 
in (an augmented version of) the pointer machine model.
In this model,
the data structure is modeled as a directed graph $M$.
In each cell of $M$, we store an element of the input set $\sS$
as well as two pointers to other cells.
To find the answer to a query $q$, i.e., a subset $\sS_q\subset \sS$,
the algorithm starts at a special ``root'' cell
and explores a connected subgraph such that all elements in $\sS_q$
can be found in some cell in the subgraph.
During the process, we only charge for pointer navigations.
Let $M_q$ be the smallest connected subgraph in which
every element of $\sS_q$ is stored at least once.
Clearly, $|M|$ is a lower bound for the space usage and $|M_q|$ is a lower bound for the query time. 
Note that this grants the algorithm unlimited computational power as well as
full information about the structure of $M$.

We use the following pointer machine lower bound framework tailored for geometric range reporting problems
by Chazlle~\cite{c1990lii} and Chazelle and Rosenberg~\cite{cr1996s}.

\begin{theorem}[Chazlle~\cite{c1990lii} and Chazelle and Rosenberg~\cite{cr1996s}]
\label{thm:pm-fw}
	Suppose there is a data structure of space $S(n)$
	which can answer range reporting queries in time $Q(n) + O(k)$
	where $n$ and $k$ are the input and output sizes respectively.
	Assume we can show the existence of a set $\sS$ of $n$ points
	such that there exist $m$ subsets $q_1,q_2,\cdots,q_m\subset\sS$,
	where $q_i,i=1,2,\cdots,m$, is the output of some query and they
	satisfy the following two conditions:
	(i) for all $i=1,2,\cdots,m$, $|q_i|\ge Q(n)$;
	and (ii) the size of the intersection of every $\beta\ge 2$ distinct
	subsets $q_{i_1}, q_{i_2}, \cdots,q_{i_{\beta}}$
	is upper bounded by some value $\alpha$, i.e., $|q_{i_1}\cap q_{i_2}\cap\cdots\cap q_{i_{\beta}}|\le \alpha$.
	Then $S(n)=\Omega(\frac{\sum_{i=1}^m|q_i|}{\beta 2^{O(\alpha)}})=\Omega(\frac{mQ(n)}{\beta 2^{O(\alpha)}})$.
\end{theorem}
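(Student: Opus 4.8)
The plan is to argue entirely inside the pointer graph $M$. Since $|M|$ lower bounds $S(n)$, it suffices to prove $|M|=\Omega\big(\sum_{i}|q_i|\,/\,(\beta\,2^{O(\alpha)})\big)$; the second form then follows from condition (i), which gives $\sum_i|q_i|\ge mQ(n)$. The starting point is that for each query $q_i$ the explored subgraph $M_{q_i}$ is connected, contains the root, stores all of $q_i$, and hence has at least $|q_i|$ cells (one element per cell) and at most $Q(n)+O(|q_i|)$ cells; since $|q_i|\ge Q(n)$ by (i), this is $\Theta(|q_i|)$ cells, of which at least $|q_i|$ store elements of $q_i$. Thus each $M_{q_i}$ is a connected subgraph of size $\Theta(|q_i|)$ in which a constant fraction of the cells are ``marked'' by $q_i$.

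Next I would fix, for each $i$, a spanning tree $T_i$ of $M_{q_i}$ rooted at the root of $M$ (a tree of out-degree at most $2$) and greedily cut it into $\Theta(|q_i|/\alpha)$ connected pieces each with $\Theta(\alpha)$ cells, taking the constant in the piece size large enough, relative to the hidden constant in $|M_{q_i}|=O(|q_i|)$, so that the pieces carrying more than $\alpha$ marked cells — the \emph{heavy} pieces — still contain a constant fraction of the $|q_i|$ marked cells, hence $\Omega(|q_i|)$ cells. Summed over $i$, the heavy pieces, counted with multiplicity, contain $\Omega\big(\sum_i|q_i|\big)$ cells of $M$ in total.

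The crux is to bound the number of heavy pieces that can contain a fixed cell $v$ of $M$; I claim it is $\beta\,2^{O(\alpha)}$, which combined with the previous paragraph gives $|M|=\Omega\big(\sum_i|q_i|/(\beta\,2^{O(\alpha)})\big)$. A heavy piece through $v$ is a connected set of $O(\alpha)$ cells containing $v$; organizing each piece as a rooted subtree and using the bounded out-degree of $M$, there are only $2^{O(\alpha)}$ candidates for such a cell-set, and each candidate stores only $O(\alpha)$ elements, so only $2^{O(\alpha)}$ of its subsets have size more than $\alpha$. Now condition (ii) enters: if some cell-set, together with some subset $Y$ of more than $\alpha$ of its stored elements, arose as a heavy piece in $\beta$ distinct queries, then $Y$ would lie in all $\beta$ outputs and their intersection would exceed $\alpha$, contradicting (ii). Hence every such ``(cell-set, $Y$)'' type is realized by fewer than $\beta$ queries, and since $v$ lies in at most one piece per query the number of heavy pieces through $v$ is at most $2^{O(\alpha)}\cdot 2^{O(\alpha)}\cdot(\beta-1)=\beta\,2^{O(\alpha)}$.

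I expect the main obstacle to be precisely this counting step. The ``$2^{O(\alpha)}$ candidate cell-sets through $v$'' claim is immediate when the pointer graph has bounded degree, but the model permits unbounded \emph{in}-degree, so some care — working with the rooted exploration subtrees, as is standard in this framework — is needed to keep the count singly exponential in $\alpha$ rather than, say, doubly exponential. By contrast, choosing the piece-size constant so that heavy pieces retain a constant fraction of the marked cells is a one-line averaging argument, and the reductions ``$|M|\le S(n)$'' and ``$\sum_i|q_i|\ge mQ(n)$'' are immediate from the framework, so the difficulty is concentrated in the multiplicity bound.
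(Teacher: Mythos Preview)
The paper does not prove this theorem at all; it is quoted as the standard pointer-machine framework of Chazelle~\cite{c1990lii} and Chazelle--Rosenberg~\cite{cr1996s} and used as a black box. So there is no in-paper proof to compare your sketch against. That said, your outline is the classical argument, and your averaging step for heavy pieces is correct.

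The gap is exactly where you flag it, and it is a genuine gap rather than a formality. You double-count (cell, heavy piece) incidences and then try to bound, for a fixed cell $v$, the number of heavy pieces containing $v$ by $\beta\,2^{O(\alpha)}$. Your justification is that there are only $2^{O(\alpha)}$ candidate cell-sets for a piece through $v$. But a piece through $v$ generally contains the \emph{parent} of $v$ in $T_i$, and with unbounded in-degree that parent can be any of arbitrarily many in-neighbours of $v$. Different queries can route into $v$ from different in-neighbours, producing arbitrarily many distinct heavy cell-sets through $v$ whose stored element-sets are pairwise disjoint, so condition~(ii) never fires and the $\beta\,2^{O(\alpha)}$ bound fails for such $v$. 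Working ``with the rooted exploration subtrees'' does not fix this as long as you insist on bounding pieces through an arbitrary cell.

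The standard repair is to change what you sum: count heavy pieces by their \emph{root} (the cell of the piece closest to the global root in $T_i$), not by an arbitrary contained cell. A piece rooted at $r$ lies entirely in the out-tree of $r$ and has size $O(\alpha)$; since out-degree is at most~$2$, there are only $2^{O(\alpha)}$ possible cell-sets, hence only $2^{O(\alpha)}$ possible $(\alpha{+}1)$-element witness sets $Y$, each realised by fewer than $\beta$ queries by~(ii). Thus at most $\beta\,2^{O(\alpha)}$ heavy pieces are rooted at any given $r$. Since the heavy pieces number $\Omega\bigl(\sum_i|q_i|/\alpha\bigr)$ in total, summing over $r\in M$ gives $|M|\cdot\beta\,2^{O(\alpha)}\ge \Omega\bigl(\sum_i|q_i|/\alpha\bigr)$, and the extra $\alpha$ is absorbed into $2^{O(\alpha)}$. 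This version uses only the bounded out-degree and avoids in-degree entirely.
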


\section{A Lower Bound for Simplex Range Reporting}

\subsection{Simplex Range Reporting Lower Bounds Through the Incidence Geometry Lens.}

Now we proceed to prove the lower bound.
Our first observation is that
to get a lower bound for simplex range reporting,
we only need to study a specific incidence geometry problem.
This is due to the fact that hyperplanes are degenerated simplicies, 
and so to show a lower bound for simplex range reporting using Theorem~\ref{thm:pm-fw},
it suffices to give a point-hyperplane configuration satisfying the two conditions in Theorem~\ref{thm:pm-fw}.
Stated in the language of incidence geometry, the first condition
requires each hyperplane to be incident to enough (at least $Q(n)$) points.
The second condition requires us to bound the size of $K_{\alpha,\beta}$
in the incidence graph.
To put it more formally,
Theorem~\ref{thm:pm-fw} implies the following lemma:

\begin{lemma}
\label{lem:inc-lb}
	If there exist a set $P$ of $n>0$ points
	and a set $H$ of $m>0$ hyperplanes each incident to at least $t\ge Q(n)$ 
	points (called $t$-rich hyperplanes) in $\R^d$
	with no complete bipartite subgraph $K_{\alpha,\beta}$
	in the incidence graph $P\times H$,
	then the simplex range reporting problem
	has a lower bound of $S(n)=\Omega(\frac{mt}{\beta 2^{O(\alpha)}})=\Omega(\frac{mQ(n)}{\beta 2^{O(\alpha)}})$.
\end{lemma}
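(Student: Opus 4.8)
The plan is to deduce Lemma~\ref{lem:inc-lb} directly from Theorem~\ref{thm:pm-fw} by exhibiting the stated point–hyperplane configuration as an instance of the abstract combinatorial setup there. The key conceptual point is that a hyperplane is a (degenerate) simplex: for any hyperplane $h\in H$, there is a query simplex $\gamma_h$ — obtained, say, by taking a slab of infinitesimal thickness around $h$ clipped to a large enough bounding box — whose output $q_h = P\cap\gamma_h$ equals exactly the set of points of $P$ lying on $h$. (One can also argue more cleanly that a simplex range reporting structure in $\R^d$ immediately answers hyperplane-incidence queries by a limiting/perturbation argument, or by noting that the lower bound framework only requires that each $q_i$ be realizable as \emph{some} query output, and a thin simplex realizes it.) So from $P$ and $H$ we obtain $m$ query outputs $q_1,\dots,q_m$ with $q_i$ the set of points of $P$ incident to the $i$-th hyperplane.

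**Next I would** verify the two hypotheses of Theorem~\ref{thm:pm-fw} for this configuration. Condition (i): each hyperplane is $t$-rich with $t\ge Q(n)$, so $|q_i|\ge t\ge Q(n)$ for all $i$, which is exactly what is required. Condition (ii): we must bound $|q_{i_1}\cap\cdots\cap q_{i_\beta}|$ for any $\beta$ distinct hyperplanes. A point $p$ lies in this intersection iff $p$ is incident to all of $h_{i_1},\dots,h_{i_\beta}$; the set of such $(p; h_{i_1},\dots,h_{i_\beta})$ is precisely a copy of $K_{*,\beta}$ in the incidence bipartite graph $P\times H$ on the right side. Since the incidence graph contains no $K_{\alpha,\beta}$, there can be at most $\alpha-1<\alpha$ points incident to $\beta$ common hyperplanes, so $|q_{i_1}\cap\cdots\cap q_{i_\beta}|\le\alpha$. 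Hence both conditions hold with the parameters $\alpha$ and $\beta$ from the lemma statement.

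**Then** Theorem~\ref{thm:pm-fw} applies verbatim and yields
\[
  S(n)=\Omega\!\left(\frac{\sum_{i=1}^m|q_i|}{\beta\,2^{O(\alpha)}}\right)
       =\Omega\!\left(\frac{mt}{\beta\,2^{O(\alpha)}}\right)
       =\Omega\!\left(\frac{mQ(n)}{\beta\,2^{O(\alpha)}}\right),
\]
where the middle inequality uses $|q_i|\ge t$ and the last uses $t\ge Q(n)$. This is exactly the claimed bound, so the lemma follows.

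**The main obstacle** — really the only non-bookkeeping point — is making the reduction from ``hyperplane incidence query'' to ``simplex range reporting query'' airtight: one must ensure that the set of points of $P$ lying on a hyperplane $h$ is genuinely the exact output of a legitimate simplex query, with no spurious extra points and no missing ones, and that this holds simultaneously for all $m$ hyperplanes against the \emph{same} point set $P$. The clean way is to fix a bounding box $B\supset P$, and for each $h$ use a simplex $\gamma_h$ that is a thin slab $\{x: |\langle x,\nu_h\rangle - c_h|\le\varepsilon_h\}\cap B$ triangulated into simplices, with $\varepsilon_h>0$ chosen small enough that $\gamma_h$ contains no point of $P\setminus h$ (possible since $P$ is finite); answering the reporting query on each constituent simplex and unioning recovers $P\cap h$. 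Everything else — conditions (i) and (ii), and the final substitution — is immediate from the definitions and from Theorem~\ref{thm:pm-fw}. It is also worth noting explicitly that $\alpha,\beta$ (and hence the $2^{O(\alpha)}$ factor) may depend on $n$, which is why the subsequent constructions aim for $\alpha=O(1)$.
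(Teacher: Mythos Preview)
Your proposal is correct and follows exactly the paper's approach: the paper justifies the lemma only by the sentence ``hyperplanes are degenerated simplices'' and by noting that the two conditions of Theorem~\ref{thm:pm-fw} translate into $t$-richness and the absence of $K_{\alpha,\beta}$ in the incidence graph. You have merely spelled this out more carefully (in particular the thin-slab realization of a hyperplane as a simplex query), which is fine.
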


It turns out that the relationship between the number of point-hyperplane incidences
and $K_{\alpha,\beta}$ is a well-studied problem in the incidence geometry community~\cite{bk2003o, as2007l, s2016l, bcv2019c}.
However, this is not directly relevant to us as we require each hyperplane to be ``rich''.
The closest result of the problem we can find is the very recent work by Pat\'akov\'a and Sharir~\cite{ps2022c}.
They showed the existence of $n$ points and $\Theta(n^d/t^{d+1})$ $t$-rich hyperplanes with no $K_{\alpha,\beta}$
in the incidence graph for $\beta = 2$ and $\alpha = O(t^{(d-2)/(d-1)})$.
They also showed a matching lower bound for the size of $\alpha$ given $\beta=2$.

Unfortunately, their result does not give us a useful lower bound.\footnote{
In fact, by plugging the parameters in~\cite{ps2022c} in Lemma~\ref{lem:inc-lb},
we can only get a lower bound of $Q(n)=\Omega((\log\frac{n^d}{S(n)})^{\frac{d-1}{d-2}})$.
}
The main reason for this is that the lower bound in Lemma~\ref{lem:inc-lb}
has a $2^{O(\alpha)}$ factor in the denominator and so
to show a nontrivial lower bound,
$\alpha$ has to be sub-logarithmic.
In our proof, we will still use the construction in~\cite{ps2022c},
but we prove an upper bound for $\beta$ by fixing $\alpha=2$.
Note that this is the opposite to the case considered in~\cite{ps2022c}.

\subsection{A Simple Point-Hyperplane Incidence Geometry Lemma.}

Here, we prove the following lemma:

\begin{lemma}
\label{lem:inc-geo}
	 There exists a configuration of $n$ points and $m=\Theta(n^{d}/t^{d+1})$
	 $t$-rich hyperplanes with no $K_{2,\beta}$
	 in the incidence graph where
	 $\beta=\Theta(n^{d-2}/t^{d(d-2)/(d-1)})$
	 for any positive integer $t \le cn^{1-1/d}$
	 for some small enough positive constant $c$.
\end{lemma}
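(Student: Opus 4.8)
The plan is to build the configuration explicitly from a grid, which is exactly the kind of ``simple deterministic construction'' the introduction promised. Let me set up parameters first. I would pick a box $[0,N)^d \cap \bN^d$ of integer points for a suitable $N$, so that the total number of points is $n = N^d$ (up to constants). For the hyperplanes, I would take a family of hyperplanes with small integer coefficients — say normal vectors $\boldsymbol{a} = (a_1,\dots,a_d)$ with each $|a_i| \le A$ and a range of intercepts $b$ — and count how many grid points each such hyperplane contains. A hyperplane $\{\boldsymbol{a}\cdot \boldsymbol{x} = b\}$ with coefficient size $A$ slices the grid in a $(d-1)$-dimensional affine sublattice, so it is incident to on the order of $N^{d-1}/A$ grid points (after choosing $b$ in the right range and, if needed, restricting to a sub-box so the count is uniform). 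Setting this $\asymp t$ forces $A \asymp N^{d-1}/t$. The number of such hyperplanes is then the number of admissible $(\boldsymbol{a},b)$ tuples, roughly $A^d \cdot N \asymp (N^{d-1}/t)^d \cdot N = N^{d^2-d+1}/t^d$; dividing out the over-count (each hyperplane has several integer representations, but that is only a constant factor once we demand $\gcd(\boldsymbol{a})=1$) and rewriting in terms of $n = N^d$ should give $m = \Theta(n^d/t^{d+1})$, matching the statement. So the first two steps are: (1) fix the grid and the coefficient bound $A$ so that each hyperplane is $t$-rich, and (2) count the hyperplanes to verify $m = \Theta(n^d/t^{d+1})$.

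The third and main step is to bound $\beta$, i.e.\ to show that no $d-1$ collinear... no, rather: no two points of $P$ lie on more than $\beta$ of the hyperplanes in $H$, with $\beta = \Theta(n^{d-2}/t^{d(d-2)/(d-1)})$. Fix two distinct grid points $\boldsymbol{p}, \boldsymbol{q} \in P$. A hyperplane through both of them is determined by its normal vector $\boldsymbol{a}$, which must satisfy $\boldsymbol{a}\cdot(\boldsymbol{p}-\boldsymbol{q}) = 0$ — one linear constraint on $\boldsymbol{a}$ — together with the size bound $\|\boldsymbol{a}\|_\infty \le A$. The integer vectors $\boldsymbol{a}$ orthogonal to a fixed nonzero vector $\boldsymbol{v} = \boldsymbol{p}-\boldsymbol{q}$ form a rank-$(d-1)$ sublattice of $\bN^d$, and the number of such $\boldsymbol{a}$ inside an $\|\cdot\|_\infty$-ball of radius $A$ is governed by the successive minima of that lattice. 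The crude bound is $O(A^{d-1})$, but that is too weak; the point is that when $\boldsymbol{v}$ is ``long'' the lattice $\boldsymbol{v}^\perp \cap \bN^d$ is ``stretched'' and contains fewer short vectors. Quantitatively, the covolume of $\boldsymbol{v}^\perp \cap \bN^d$ is $\asymp \|\boldsymbol{v}\|_2 / \gcd(\boldsymbol{v})$, so by a standard lattice-point-counting estimate the number of $\boldsymbol{a}$ with $\|\boldsymbol{a}\|_\infty \le A$ in this $(d-1)$-dimensional lattice is $O\!\left(A^{d-1} / (\|\boldsymbol{v}\|_2/\gcd(\boldsymbol{v})) + A^{d-2}\right)$, which after accounting for the intercept $b$ (only $O(A)$... actually once $\boldsymbol{a}$ is fixed and the hyperplane must pass through $\boldsymbol{p}$, the intercept $b = \boldsymbol{a}\cdot\boldsymbol{p}$ is determined) gives at most $O(A^{d-1}/\|\boldsymbol{v}\|_\infty)$ hyperplanes through both $\boldsymbol{p}$ and $\boldsymbol{q}$ when $\|\boldsymbol{v}\|_\infty \gtrsim A$. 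But $\|\boldsymbol{v}\|_\infty$ can be as small as $1$, and then we only get $O(A^{d-1})$, which must be shown to be $O(\beta)$ — and indeed with $A \asymp N^{d-1}/t$ and $\beta \asymp N^{d(d-2)}/t^{d(d-2)/(d-1)}$ one checks $A^{d-1} = N^{(d-1)^2}/t^{d-1}$ versus $\beta = N^{d(d-2)}/t^{d(d-2)/(d-1)}$; since $(d-1)^2 < d(d-2)$ is \emph{false} ($(d-1)^2 = d^2-2d+1 > d^2-2d = d(d-2)$), the naive count is actually \emph{too big}. So the construction must prevent pairs $\boldsymbol{p},\boldsymbol{q}$ with a very short difference vector from lying on too many hyperplanes — presumably by thinning the grid, or by restricting the hyperplane family so that no short vector $\boldsymbol{v}$ is realized as a difference of two grid points that simultaneously lie in many hyperplanes, or (most likely, following Pat\'akov\'a--Sharir) by a more careful choice where the grid is taken in a box whose side lengths are incommensurate so that every difference vector is forced to be long in a controlled sense.

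I expect the heart of the argument — and the main obstacle — to be precisely this bound on $K_{2,\beta}$: controlling, uniformly over all pairs of points, the number of $t$-rich hyperplanes through both. This is where one must invoke (or re-derive) the mechanism from Pat\'akov\'a and Sharir~\cite{ps2022c}: their construction already produces $n$ points and $\Theta(n^d/t^{d+1})$ $t$-rich hyperplanes with the $K_{\alpha,\beta}$ structure analyzed for $\beta = 2$, and the excerpt says ``a small modification'' suffices. Concretely, I would take \emph{their} point set and \emph{their} hyperplane family verbatim, and then simply re-do the double-counting with the roles of $\alpha$ and $\beta$ swapped: instead of asking ``how large is the largest set of points common to two hyperplanes'' (their $\alpha$), I ask ``how large is the largest set of hyperplanes common to two points'' (my $\beta$). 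The self-duality of the point--hyperplane incidence relation in $\R^d$ (a point-hyperplane duality sends points to hyperplanes and preserves incidences), combined with the near-symmetry of the grid construction, should make the bound on $\beta$ come out by an argument formally dual to their bound on $\alpha$, yielding $\beta = \Theta(n^{d-2}/t^{d(d-2)/(d-1)})$. The routine parts I would not belabor are the exact constants in the lattice-point counts, the $\gcd$ normalization of normal vectors, and the verification that $t \le c n^{1-1/d}$ is exactly the regime in which $A \ge 1$ (equivalently, the grid is coarse enough to support $t$-rich hyperplanes at all) and in which all the asymptotic estimates above hold with the claimed $\Theta$'s rather than just $O$'s.
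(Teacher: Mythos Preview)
Your setup drifts away from the construction that actually works, and the place where you wave your hands — the $K_{2,\beta}$ bound — is exactly the one step that needs a concrete argument.

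First, the paper does \emph{not} use a cube grid $[0,N)^d$. It uses the elongated box $\{1,\dots,t^{1/(d-1)}\}^{d-1}\times\{1,\dots,n/t\}$ together with hyperplanes written as graphs $X_d=b+\sum_{i=1}^{d-1}a_iX_i$ with $a_i\in\{1,\dots,A\}$, $b\in\{1,\dots,B\}$, where $A=\lfloor n/(d\,t^{d/(d-1)})\rfloor$ and $B=\lfloor n/(dt)\rfloor$. With this parametrization each hyperplane meets the grid in exactly $t$ points and $m=A^{d-1}B=\Theta(n^d/t^{d+1})$. Your cube with $A\asymp N^{d-1}/t$ and $\|a\|_\infty$-bounded normals already gives you trouble in the hyperplane count (you wrote $A^d\cdot N$, but the correct order is $A^{d+1}N$ once you track the intercept range), and, as you yourself discovered, it makes the $\beta$ bound fail by a factor $N/t^{1/(d-1)}$ for pairs of adjacent grid points.

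Second, your proposed fix — ``take the Pat\'akov\'a--Sharir construction verbatim and swap $\alpha,\beta$ by duality'' — is not an argument. Point--hyperplane duality sends \emph{this} point set and \emph{this} hyperplane set to some other configuration; the construction is not self-dual, so the $\alpha$-bound does not transfer formally to a $\beta$-bound. The paper's actual $\beta$-bound is a two-line pigeonhole: among any $A^{d-2}+1$ hyperplanes in the family, two must share their first $d-2$ coefficients $a_1,\dots,a_{d-2}$; if those two also pass through two distinct grid points, the points differ in some $X_i$ with $i\le d-1$, say $i=d-1$, and then the difference of the two graph functions is $(b_1-b_2)+(a_{1,d-1}-a_{2,d-1})X_{d-1}$, a univariate affine function vanishing at two distinct values of $X_{d-1}$, hence identically zero, so the two hyperplanes coincide. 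That gives $\beta=A^{d-2}+1=\Theta(n^{d-2}/t^{d(d-2)/(d-1)})$ directly. This pigeonhole-on-coefficients step is the one idea you are missing; once you have the elongated grid and the graph parametrization, there is no lattice-point counting, no successive minima, and no duality involved.
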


We consider the same construction in~\cite{ps2022c}.
For the completeness and readability, we present the construction and reprove some basic facts we will use.
W.l.o.g., we assume that $t^{1/(d-1)}$ and $n/t$ are integers; otherwise we can increase $t$ and decrease $n$ 
slightly to ensure the assumption.
(It can be easily shown that $t,n$ will remain asymptotically the same after the process.)
Let $G$ be an integer grid in $\R^d$ of size $t^{1/(d-1)} \times t^{1/(d-1)} \times \cdots \times t^{1/(d-1)} \times n/t$.
Clearly, $G$ has $n$ grid points.
We construct hyperplanes of form
\[
	X_d=b + \sum_{i=1}^{d-1}a_iX_i,
\]
where $a_i\in\{1,\cdots,A\}$ and $b\in\{1,\cdots,B\}$ for $A=\lfloor\frac{n}{dt^{d/(d-1)}}\rfloor$ and $B=\lfloor\frac{n}{dt}\rfloor$.
Since $t\le cn^{1-1/d}$ for a small enough positive constant $c$, $A,B\ge 1$ and so our construction is valid.
We create all the possible distinct hyperplanes by picking $a_i$'s and $b$ as above. 
Let $\H$ be the set of all the hyperplanes we generated this way. 
As we have $A$ choices for each $a_i$ and $B$ choices for $b$, 
the total number of hyperplanes we generated is $m=|\H| = A^{d-1}B=\Theta(n^d/t^{d+1})$.

Now consider a hyperplane $h_j \in \H$ and its intersection with $G$.
Observe that all the coefficients of $h_j$ are positive integers.
This means that plugging in an integer value $x_i$ 
for $X_i$ for $i=1, \cdots, d-1$ will yield the integer value $x_d = b + \sum_{i=1}^{d-1}a_i x_i$ thus a point
$(x_1, \cdots, x_d)$ with integer coordinates that lies on $h_j$.
The value $x_d$ is maximized when $b$ is set to $B$ and all $a_i$'s are set to $A$.
Furthermore, the largest value of the first $d-1$ dimensions of $G$ is $t^{1/(d-1)}$.
Since
\[
	B+(d-1)At^{1/(d-1)} \le n/t,
\]
each hyperplane in $\H$ intersects exactly $(t^{1/(d-1)})^{d-1}=t$ grid points.

Finally, we bound $\beta$ given $\alpha=2$.
We use the following simple lemma.
This is the only new property we show in this construction
and it has a very simple proof.
\begin{lemma}\label{lem:int}
	Any subset $\H' \subset \H$ of size $|\H'|\ge A^{d-2} + 1$
	contains at most one point in common.
\end{lemma}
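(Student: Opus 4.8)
The plan is to show that two distinct hyperplanes in $\H$ can share at most one grid point, unless they agree on the $a_i$ coefficients, in which case they are parallel and share nothing. First I would take two distinct hyperplanes $h, h' \in \H'$ given by $X_d = b + \sum_{i=1}^{d-1} a_i X_i$ and $X_d = b' + \sum_{i=1}^{d-1} a_i' X_i$. If $(a_1,\dots,a_{d-1}) = (a_1',\dots,a_{d-1}')$ but $b \ne b'$, then $h$ and $h'$ are parallel and disjoint, so they share no point at all. Hence any two hyperplanes in $\H'$ that share a point must have distinct coefficient vectors $(a_1,\dots,a_{d-1})$.

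Next I would use a counting/pigeonhole argument on the coefficient vectors. There are only $A^{d-1}$ possible coefficient vectors, and since $|\H'| \ge A^{d-2}+1$ the relevant bound must come from looking at how the common points constrain these vectors. The key idea is: suppose $P' \subset \R^d$ is the set of points lying on \emph{all} hyperplanes in $\H'$, and suppose $|P'| \ge 2$. Pick two distinct such points $x = (x_1,\dots,x_d)$ and $y = (y_1,\dots,y_d)$. Every hyperplane $h \in \H'$ satisfies both $x_d = b + \sum a_i x_i$ and $y_d = b + \sum a_i y_i$; subtracting eliminates $b$ and yields the linear constraint $x_d - y_d = \sum_{i=1}^{d-1} a_i (x_i - y_i)$ on the coefficient vector $(a_1,\dots,a_{d-1})$. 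Since $x \ne y$ and both lie on the same grid-incident hyperplanes, the vector $(x_1-y_1,\dots,x_{d-1}-y_{d-1})$ is nonzero (if it were zero, then $x_d = y_d$ too, contradicting $x\ne y$), so this is a nontrivial affine constraint. The number of coefficient vectors $(a_1,\dots,a_{d-1}) \in \{1,\dots,A\}^{d-1}$ satisfying one nontrivial linear equation is at most $A^{d-2}$ — fix the $d-2$ coordinates corresponding to any index $j$ with $x_j \ne y_j$ excluded... more carefully, pick an index $j$ with $x_j - y_j \ne 0$, then $a_j$ is determined by the other $a_i$'s, giving at most $A^{d-2}$ solutions.

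Combining: if $|P'| \ge 2$, then all hyperplanes in $\H'$ have coefficient vectors lying in a set of size at most $A^{d-2}$, and moreover (from the parallel-hyperplane observation) the coefficient vectors of distinct hyperplanes sharing $\ge 1$ point must themselves be distinct — but actually since all of $\H'$ shares the \emph{same} two points $x,y$, distinct hyperplanes in $\H'$ have distinct coefficient vectors (equal coefficients plus equal incidence forces equal $b$). Hence $|\H'| \le A^{d-2}$, contradicting $|\H'| \ge A^{d-2}+1$. Therefore $|P'| \le 1$, which is exactly the claim.

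I expect the main obstacle to be the bookkeeping around the degenerate cases: ensuring that $(x_1-y_1,\dots,x_{d-1}-y_{d-1}) \ne 0$ whenever $x \ne y$ (which follows because the last coordinate of a grid point on such a hyperplane is a function of the first $d-1$), and being careful that "$|\H'|$ distinct hyperplanes all through $x$ and $y$" really does inject into the set of at most $A^{d-2}$ admissible coefficient vectors. Everything else is a one-line linear-algebra pigeonhole, so this lemma should indeed have, as the paper says, "a very simple proof."
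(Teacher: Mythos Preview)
Your proof is correct and is essentially the same argument as the paper's. Both derive, from two distinct common points, a nontrivial linear constraint on the coefficient vector $(a_1,\dots,a_{d-1})$ and then pigeonhole into $A^{d-2}$ possibilities; the paper phrases the pigeonhole as ``two hyperplanes in $\H'$ must share their first $d-2$ coefficients and hence coincide,'' while you phrase it as ``at most $A^{d-2}$ coefficient vectors satisfy the constraint, and the map from $\H'$ to coefficient vectors is injective,'' but these are the same counting step viewed from opposite ends.
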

\begin{proof}
  We do proof by contradiction. 
  Assume hyperplanes in $\H'$ have two distinct points $g_1$ and $g_2$ in common, 
  then there must be at least one coordinate on which they differ.
  Note that the $d$-th coordinate cannot be the only difference between $g_1$ and $g_2$
  because hyperplanes in $\H$ are not parallel to the $d$-th axis.
  W.l.o.g., we can assume that $g_1$ and $g_2$ differ in their $(d-1)$-th coordinate.
  By the pigeonhole principle, there will be two hyperplanes $h_1, h_2 \in \H'$ that have identical
  first $d-2$ coefficients. 
  Assume $h_1$ is defined by coefficients $a_1, \cdots, a_{d-2}, a_{1,d-1}, b_1$
  and $h_2$ is defined by coefficients $a_1, \cdots, a_{d-2}, a_{2,d-1}, b_2$.
  We can view $h_1$ and $h_2$ as linear functions, $f_1$ and $f_2$, from $\R^{d-1}$ to $\R$.
  Let $X^{(d-1)} = (X_1, \cdots, X_{d-1})$.
  We thus write 
  \[
    f_1(X^{(d-1)}) = b_1 + a_{1,d-1}X_{d-1}+ \sum_{i=1}^{d-2} a_i X_i 
  \]
  and 
  \[
    f_2(X^{(d-1)}) = b_2 + a_{2,d-1}X_{d-1}+ \sum_{i=1}^{d-2} a_i X_i.
  \]
  Consider the function
  \[
    D(X^{(d-1)}) = f_1(X^{(d-1)}) -f_2(X^{(d-1)}) = b_1-b_2 + (a_{1,d-1}-a_{2,d-1})X_{d-1}.
  \]
  Let $g'_1$ and $g'_2$ be the projection of $g_1$ and $g_2$ onto the first $d-1$ dimensions.
  Since $h_1$ and $h_2$ pass through points $g_1$ and $g_2$ we have
  \[
    D(g'_1) = D(g'_2) = 0.
  \]
  However, the function $D(\cdot)$ is essentially a univariate linear function (i.e., a line
  in the coordinate system defined by the $(d-1)$-th and $d$-th axes). 
  Furthermore, since $g_1$ and $g_2$ have distinct $(d-1)$-th coordinates, it follows that this function
  is zero on two distinct points. 
  This implies that the function $D(\cdot)$ must be identical to the zero function which implies
  $h_1 = h_2$, a contradiction. 
  Thus, the lemma follows. 
\end{proof}
According to Lemma~\ref{lem:int}, there is no $K_{2,\beta}$ in the incidence graph of our construction
for 
\[
	\beta = A^{d-2} + 1 = \Theta(n^{d-2}/t^{d(d-2)/(d-1)}).
\]

This completes the proof of Lemma~\ref{lem:inc-geo}.

\subsection{Combining Them Together.}

Now we are ready to show a lower bound for simplex range reporting.
Suppose $\lceil Q(n) \rceil < cn^{1-1/d}$, where $c$ is the constant in Lemma~\ref{lem:inc-geo}, 
then we can set $t=\lceil Q(n) \rceil$ and Lemma~\ref{lem:inc-geo} applies.
By Lemma~\ref{lem:inc-lb}, we obtain a lower bound of
\[
	S(n) = \Omega\left(
			\frac{
				\Theta\left(\frac{n^d}{Q(n)^{d+1}}\right) \cdot Q(n)
			}{
				\Theta\left( \frac{ n^{d-2} }{ Q(n)^{d(d-2)/(d-1)}} \right) \cdot 2^{O(2)}
			}
		\right)
		= \Omega\left(
			\frac{n^2}{Q(n)^{\frac{d}{d-1}}}
		\right)
		\implies
		Q(n) = \Omega\left( \left( \frac{n^2}{S(n)} \right)^{\frac{d-1}{d}} \right).
\]
On the other hand, if $\lceil Q(n) \rceil \ge cn^{1-1/d}$, then there is nothing to prove since this is already a lower bound.
To sum up, we have proved the following theorem:
\begin{theorem}
	The simplex range reporting problem has a lower bound of $Q(n)=\Omega((n^2/S(n))^{(d-1)/d})$.
\end{theorem}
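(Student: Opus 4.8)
The plan is to assemble the three ingredients already developed in the excerpt --- the pointer-machine framework of Theorem~\ref{thm:pm-fw}, its incidence-geometry reformulation in Lemma~\ref{lem:inc-lb}, and the explicit grid construction of Lemma~\ref{lem:inc-geo} --- and then carry out the arithmetic that converts a space bound into a query-time bound. Concretely, I would argue by cases on the size of $Q(n)$. Let $c$ be the positive constant from Lemma~\ref{lem:inc-geo}.

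\textbf{Case 1: $\lceil Q(n)\rceil \ge c\,n^{1-1/d}$.} Here there is nothing to do: since $n^2/S(n) \le n$ for any data structure using at least linear space (and for sublinear space the reporting problem is trivial), we have $(n^2/S(n))^{(d-1)/d} = O(n^{(d-1)/d}) = O(Q(n))$, so the claimed bound holds vacuously. I would phrase this as ``$Q(n) = \Omega(n^{1-1/d})$ is itself the desired lower bound.''

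\textbf{Case 2: $\lceil Q(n)\rceil < c\,n^{1-1/d}$.} This is the substantive case. Set $t = \lceil Q(n)\rceil$, which is a positive integer with $t \le c n^{1-1/d}$, so Lemma~\ref{lem:inc-geo} applies and yields a point-hyperplane configuration with $n$ points, $m = \Theta(n^d/t^{d+1})$ hyperplanes each incident to at least $t \ge Q(n)$ points, and no $K_{2,\beta}$ in the incidence graph for $\beta = \Theta(n^{d-2}/t^{d(d-2)/(d-1)})$. Feeding this into Lemma~\ref{lem:inc-lb} with $\alpha = 2$ (so the factor $2^{O(\alpha)}$ is an absolute constant), I get
\[
  S(n) = \Omega\!\left(\frac{m\,t}{\beta}\right)
       = \Omega\!\left(\frac{(n^d/t^{d+1})\cdot t}{n^{d-2}/t^{d(d-2)/(d-1)}}\right)
       = \Omega\!\left(\frac{n^2}{t^{d/(d-1)}}\right).
\]
The exponent bookkeeping is the one spot to be careful with: the power of $t$ in the denominator is $(d+1) - 1 - \frac{d(d-2)}{d-1} = d - \frac{d(d-2)}{d-1} = \frac{d(d-1) - d(d-2)}{d-1} = \frac{d}{d-1}$, and the power of $n$ is $d - (d-2) = 2$. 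Rearranging $S(n) = \Omega(n^2 / t^{d/(d-1)})$ and using $t = \Theta(Q(n))$ gives $Q(n)^{d/(d-1)} = \Omega(n^2/S(n))$, i.e. $Q(n) = \Omega\big((n^2/S(n))^{(d-1)/d}\big)$, which is exactly the statement.

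The main obstacle is not really a mathematical difficulty but a verification one: one must make sure the regime hypotheses line up across the lemmas. In particular, Lemma~\ref{lem:inc-geo} requires $t$ to be a positive integer with $t \le c n^{1-1/d}$, which is why the case split on $\lceil Q(n)\rceil$ versus $c n^{1-1/d}$ is needed, and why we take the ceiling (so that $t$ is an integer and $t \ge Q(n)$ simultaneously). One should also note that the rounding of $t$ and $n$ to make $t^{1/(d-1)}$ and $n/t$ integers, already dealt with inside the proof of Lemma~\ref{lem:inc-geo}, only changes the parameters by constant factors and hence does not affect the asymptotic bound. Beyond that, the proof is a direct substitution, so I would keep the write-up to essentially the displayed computation above plus one sentence for each case.
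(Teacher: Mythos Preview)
Your proposal is correct and follows essentially the same approach as the paper: the same case split on $\lceil Q(n)\rceil$ versus $cn^{1-1/d}$, the same choice $t=\lceil Q(n)\rceil$, the same invocation of Lemma~\ref{lem:inc-geo} followed by Lemma~\ref{lem:inc-lb} with $\alpha=2$, and the same arithmetic leading to $S(n)=\Omega(n^2/Q(n)^{d/(d-1)})$. Your write-up even adds a bit more than the paper does, namely the explicit exponent bookkeeping and the justification that $S(n)\ge n$ makes Case~1 vacuous.
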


\section{Open Problems}

There are three main open problems.
The first and the major open problem is to show a tight lower bound
for super-linear space data structures for simplex range reporting.
Our current construction is only optimal
when the space usage is restricted to linear.
Although it is one of the most important cases for the problem,
it would be desirable to obtain a tight space-time tradeoff.
The main challenge here is to generate more $t$-rich hyperplanes
without increasing $\beta$ too much while restricting $\alpha$ to be small, say a constant.

Second, it is open if we can achieve tight lower bounds
for other models of computation.
For example, can we get a tight query time lower bound for
the general simplex range searching problem 
in the semigroup arithmetic model given linear space?
In this model, it is also possible to formulate
a lower bound framework based on the point-hyperplane incidence property.
But in this case, we need to bound $\alpha$ such that its value
decreases proportional to $\beta$.
See~\cite{c1990lii,c2001t} for the classical lower bound framework in this model.
Unfortunately, our construction does not have this property.

Finally, it is interesting to see if such improvement can be made
in related problems like multilevel data structures
as well as the dual stabbing problems.

\bibliography{reference}{}
\bibliographystyle{abbrv}

\end{document}